\pgfplotsset{compat=newest}
\tikzset{
    hatch distance/.store in=\hatchdistance,
    hatch distance=100pt,
    hatch thickness/.store in=\hatchthickness,
    hatch thickness=0.3pt
}
\newcommand\footnoteref[1]{\protected@xdef\@thefnmark{\ref{#1}}\@footnotemark}
\newtheorem{theorem}{Theorem}
\newtheorem{lemma}[theorem]{Lemma}
\newtheorem{corollary}[theorem]{Corollary}
\newtheorem{construction}[theorem]{Construction}
\newtheorem{definition}{Definition}
\newenvironment{mymatrix}{\begin{bmatrix}} {\end{bmatrix} }
\def\ve#1{{\mathchoice{\mbox{\boldmath$\displaystyle #1$}}%
              {\mbox{\boldmath$\textstyle #1$}}%
              {\mbox{\boldmath$\scriptstyle #1$}}%
              {\mbox{\boldmath$\scriptscriptstyle #1$}}}}
\newcommand{\todo}[1]{{\color{red}[#1]}}
\newcommand{\pu}[1]{{\color{brown}[pu: #1]}}
\newcommand{\jsrn}[1]{{\color{teal}[jr: #1]}}
\definecolor{darkgreen}{rgb}{0,0.5,0}
\newcommand{\Fq}{\ensuremath{\mathbb{F}_q}}
\newcommand{\Fqm}{\ensuremath{\mathbb{F}_{q^m}}}
\newcommand{\Code}{\mathcal{C}}
\renewcommand{\a}{\ve{a}}
\renewcommand{\c}{\ve{c}}
\newcommand{\ZZ}{\mathbb{Z}}
\newcommand{\RR}{\mathbb{R}}
\newcommand{\rk}{\mathrm{rk}}
\newcommand{\x}{\ve{x}}
\renewcommand{\t}{\ve{t}}
\renewcommand{\r}{\ve{r}}
\newcommand{\wtSRWITHFIELD}[1]{\mathrm{wt}_{\mathrm{SR},\ell,#1}}
\newcommand{\wtSR}{\wtSRWITHFIELD{q}}
\newcommand{\NM}[2]{\mathrm{NM}_{#1}(#2)}
\newcommand{\npr}{\eta}
\newcommand{\nmmin}{\mu}
\newcommand{\TsetWITHARGUMENTS}[1]{\mathcal{T}_{#1}}
\newcommand{\Tset}{\TsetWITHARGUMENTS{t,\ell,\nmmin}}
\newcommand{\Tsett}{\Tset}
\definecolor{constructionAcolor}{rgb}{1,0.4,0}  %
\definecolor{constructionBcolor}{rgb}{0.8,0,1}	%
\definecolor{constructionCcolor}{rgb}{0,0,1}	%
\definecolor{constructionDcolor}{rgb}{0,0,0}	%
\definecolor{constructionEcolor}{rgb}{0,0.7,0}	%
\newcommand{\SparseSkewPolys}{\mathcal{R}}
\newcommand{\SkewPolys}{\Fqm[x;\sigma]}
\newcommand{\ev}{\mathrm{ev}}
\DeclareMathOperator{\supp}{supp}
\newcommand{\alphaVec}{\ve{\alpha}}
\newcommand{\betaVec}{\ve{\beta}}
\newcommand{\Fqg}{\mathbb{F}_{q^g}}
\newcommand{\Norm}[1]{\mathcal{N}_{#1}}
\newcommand{\wtRWITHFIELDSIZE}[1]{\mathrm{wt}_{\mathrm{R},#1}}
\newcommand{\wtR}{\wtRWITHFIELDSIZE{q}}
\newcommand{\BallSize}{\mathcal{B}}
\newcommand{\ListSize}{\mathcal{L}}
\newcommand{\dSRWITHFIELDSIZE}[1]{\mathrm{d}_{\mathrm{SR},\ell,#1}}
\newcommand{\dSR}{\dSRWITHFIELDSIZE{q}}
\newcommand{\an}{C}
\newcommand{\atau}{D}
\newcommand{\Gal}{\mathrm{Gal}}
\newcommand{\LRScode}{\Code_\mathsf{LRS}^{(\a,\betaVec)}[n,k]}
\title{Bounds on List Decoding of \\ Linearized Reed--Solomon Codes}
\author{\IEEEauthorblockN{Sven Puchinger, Johan Rosenkilde}
\IEEEauthorblockA{
Department of Applied Mathematics and Computer Science, \\ Technical University of Denmark (DTU), Lyngby, Denmark\\
Email: svepu@dtu.dk, jsrn@dtu.dk
\thanks{This work has been supported by the European Union's Horizon 2020 research and innovation programme under the Marie Sklodowska-Curie grant agreement no.~713683.}
}
}
\begin{document}

\maketitle

\begin{abstract}
Linearized Reed--Solomon (LRS) codes are sum-rank metric codes that fulfill the Singleton bound with equality.
In the two extreme cases of the sum-rank metric, they coincide with Reed--Solomon codes (Hamming metric) and Gabidulin codes (rank metric).
List decoding in these extreme cases is well-studied, and the two code classes behave very differently in terms of list size, but nothing is known for the general case.
In this paper, we derive a lower bound on the list size for LRS codes, which is, for a large class of LRS codes, exponential directly above the Johnson radius.
Furthermore, we show that some families of linearized Reed--Solomon codes with constant numbers of blocks cannot be list decoded beyond the unique decoding radius.
\end{abstract}

\begin{IEEEkeywords}
Sum-Rank Metric, Linearized Reed--Solomon Codes, List Decoding
\end{IEEEkeywords}

\section{Introduction}

The sum-rank metrics is a fairly recent family of metrics: two vectors of length $n$ are split into $\ell$ blocks each and their sum-rank distance is the sum of the rank distances of the block pairs. For $\ell=n$, this is simply the Hamming metric, while for $\ell=1$ it is the rank metric.
This family was introduced in 2010 \cite{nobrega2010multishot}, motivated by applications in multi-shot network coding.
Other applications are distributed storage \cite{martinez2019universal}, other aspects of network coding \cite{martinez2019reliable}, and space-time codes \cite{shehadeh2020rate}.
There are several code constructions and decoding algorithms for codes in the sum-rank metric \cite{wachter2011partial,wachter2012rank,wachter2015convolutional,napp2017mrd,napp2018faster,martinez2018skew,boucher2019algorithm,martinez2019reliable,caruso2019residues,bartz2020fast,martinezpenas2020sumrank,byrne2020fundamental}.

The class of linearized Reed--Solomon (LRS) codes were introduced in \cite{martinez2018skew}, and for a given sum-rank metric, there is a large family of LRS codes meeting the Singleton bound for that metric.
When the sum-rank metric is actually the Hamming metric, the corresponding family of LRS codes is the well-known Reed--Solomon (RS) codes \cite{reed1960polynomial}, and when it is the rank metric, the corresponding LRS codes are the Gabidulin codes \cite{Delsarte_1978,Gabidulin_TheoryOfCodes_1985,Roth_RankCodes_1991}.
Also the restriction on the number of blocks $\ell$ and the block size $\tfrac{n}{\ell}$ of an LRS code are a mix of the two extreme cases: If the code is defined over $\Fqm$, where the rank is taken w.r.t.\ the subfield $q$, then LRS codes require $\ell <q$ and $\tfrac{n}{\ell}\leq m$.

The topic of this paper is the list size \cite{elias1957list,wozencraft1958list} of LRS codes, which is the maximum number of codewords in a ball of given radius in the sum-rank metric (maximized over all possible centers of the ball).
For the extreme cases of LRS codes, the combinatorial list decoding problem is well-studied:

RS codes in the Hamming metric ($\ell=n$) have a polynomial list size up to the relative decoding radius $J_R = 1 - \sqrt R$, where $R$ the rate, known as the Johnson radius \cite{johnson_new_1962}.
Understanding the list size beyond $J_R$ is a long-standing open problem.
We know that ``most'' RS codes allow list decoding beyond $J_R$ \cite{rudra2014every}.
On the other hand, \cite{justesen2001bounds} and \cite{ben2009subspace} showed that the list size is exponential below the channel capacity for RS codes whose dimension grows exponentially slower than the length.

For Gabidulin codes in the rank metric ($\ell=1$), Wachter-Zeh \cite{wachter2013bounds} and Raviv--Wachter-Zeh \cite{raviv2016some} adapted the arguments of \cite{justesen2001bounds, ben2009subspace} to show that \emph{all} Gabidulin codes with $n=m$ has an exponential list-size starting from $J_R$, and that \emph{some} have exponential list size already from half the minimum distance.
For the latter result, they gave explicit constructions with fixed rates $\geq 0.2$.

\subsection{Contributions}

We basically extend the known list size results for Gabidulin codes in the rank metric to almost all LRS codes in the sum-rank metric.

More specifically, in Section~\ref{sec:exponential_from_johnson_radius}, we show that \emph{all} LRS codes
have exponential list size in $n$ above a specific radius.
For LRS codes with $\ell \in o(n)$ and of smallest possible field extension degree $m = \tfrac{n}{\ell}$, this radius is the Johnson radius $J_R$.
For certain families of LRS codes with $\ell \in \Theta(n)$, we also obtain a fixed relative decoding radius slightly beyond $J_R$.
For RS codes $\ell = n$, our bound is the same as in \cite{ben2009subspace}.

In Section~\ref{sec:some_not_list-decodable}, we show that \emph{some} families of LRS codes have exponential list size directly above half the minimum distance.
For $\ell=1,2,\dots$, the constructed families may have rates $\geq 0.2,0.33,0.33,0.5,0.2,0.33,\dots$, and asymptotically $\geq 1-\Theta(\nicefrac{1}{\sqrt{\ell}})$.
This result extends \cite{raviv2016some} to LRS codes with a constant number of blocks $\ell$.

The results indicate that LRS codes in the studied parameter ranges behave
similarly to Gabidulin codes ($\ell=1$) in terms of list decodability.
\cref{fig:results_illustration} illustrates some of the results for LRS codes with $m = \tfrac{n}{\ell}$. %

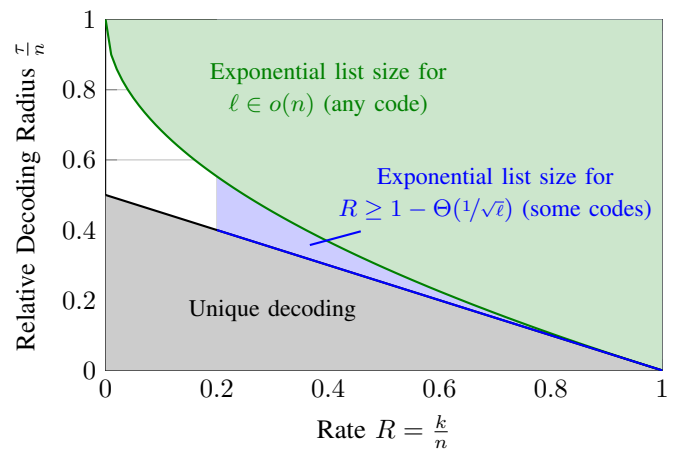
\begin{figure}[ht!]
\begin{center}
\begin{tikzpicture}
\pgfplotsset{compat = 1.3}
\begin{axis}[
	legend style={nodes={scale=0.7, transform shape}},
	width = \columnwidth,
	height = 0.7\columnwidth,
	xlabel = {{Rate $R = \tfrac{k}{n}$}},
	ylabel = {{Relative Decoding Radius $\tfrac{\tau}{n}$}},
	xmin = 0.0,
	xmax = 1.0,
	ymin = 0.0,
	ymax = 1.0,
	grid=both]

\addplot [name path=johnson_radius, domain=0:1, samples=101,unbounded coords=jump, color=darkgreen, thick] {1-x^(0.5)};
\addplot [name path=unique_radius, domain=0:1, samples=101,unbounded coords=jump, color=black, thick] {0.5*(1-x)};
\addplot [name path=all_one, domain=0:1, samples=101,unbounded coords=jump, draw=none] {1};
\addplot [name path=all_zero, domain=0:1, samples=101,unbounded coords=jump, draw=none] {0};
\addplot[black!20, forget plot] fill between[of=unique_radius and all_zero];

\addplot [name path=all_one_half, domain=0.2:1, samples=101,unbounded coords=jump, draw=none] {1};
\addplot [name path=unique_radius_half, domain=0.2:1, samples=101,unbounded coords=jump, color=blue, thick] {0.5*(1-x)};
\addplot[blue!20, forget plot] fill between[of=all_one_half and unique_radius_half];

\addplot[darkgreen!20, forget plot] fill between[of=johnson_radius and all_one];

\node[black] at (axis cs:0.3,0.17) {\small Unique decoding};
\node[darkgreen, align=center] at (axis cs:0.4,0.8) {\small Exponential list size for \\ \small $\ell \in o(n)$ (any code)};
\node (pointer_some) at (axis cs:0.35,0.35) {};
\node[blue, align=center] (label_some) at (axis cs:0.7,0.5) {\small Exponential list size for \\ \small $R \geq 1-\Theta(\nicefrac{1}{\sqrt{\ell}})$ (some codes)};
\draw[blue, thick] (label_some) to (pointer_some);

\end{axis}
\end{tikzpicture}
\end{center}
\vspace{-0.5cm}
\caption{Illustration of the results for LRS codes with block size $\tfrac{n}{\ell}=m$, where $m$ is the extension degree of the underlying field extension.}
\label{fig:results_illustration}
\end{figure}

\section{Preliminaries}

Let $q$ be a prime power and $m$ be a positive integer.
We denote by $\Fq$ the finite field of size $q$ and by $\Fqm$ its extension field of extension degree $m$. We will make use of the fact that $\Fqm$ is an $m$-dimensional $\Fq$-vector space, which means that the rank $\rk_{\Fq}(\x) := \dim_{\Fq} \langle x_1,\dots,x_{\npr}\rangle_{\Fq}$ of a vector $\x \in \Fqm^{\npr}$ is well-defined.

\subsection{Sum-Rank Metric}

Throughout this paper, let $\ell,\eta,n \in \ZZ_{> 0}$ such that $n=\ell \eta$.
We say that $n$ is the \emph{code length}, $\ell$ is the \emph{number of blocks}, and $\eta$ is the \emph{block size}.
Such a triple together with a finite field $\Fqm$ induces a sum-rank metric defined as follows.
\begin{definition}
The ($\ell$-)sum-rank weight on $\Fqm^n$ is defined as
\begin{equation*}
\wtSR \, : \, \Fqm^n \to \ZZ_{\geq 0}, \quad \x  \mapsto \textstyle\sum_{i=1}^{\ell} \rk_{\Fq}(\x_i),
\end{equation*}
where we write $\x = \big[ \x_1 | \x_2 | \dots | \x_\ell \big]$ with $\x_i \in \Fqm^{\npr}$.
Furthermore, the ($\ell$-)sum-rank distance is defined as
\begin{align*}
\dSR \, : \, \Fqm^n \times \Fqm^n \to \ZZ_{\geq 0}, \quad [\x_1,\x_2] \mapsto \wtSR(\x_1-\x_2).
\end{align*}
\end{definition}
For $\ell=1$, the sum-rank metric coincides with the rank metric and for $\ell=n$, it is the Hamming metric.

The ball with center $\r \in \Fqm^n$ and radius $\tau$ w.r.t.\ the $\ell$-sum-rank metric is defined as
\begin{align*}
\BallSize_{\tau,\ell}(\r) := \left\{ \x \in \Fqm^n \, : \, \dSR(\x,\r) \leq \tau \right\}.
\end{align*}
If $\Code \subseteq \Fqm^n$ is a code (i.e.~any subset of $\Fqm^n$), and $\tau$ is any positive integer at most $n$, then the $\tau$-\emph{list size} of $\Code$ (wrt.~the $\ell$-sum-rank) is defined as:
\[
  \ListSize(\Code, \tau) := \max_{\r \in \Fqm} |\Code \cap \BallSize_{\tau,\ell}(\r)| \ .
\]

\subsection{Conjugacy in a Finite Field}

Let $\psi \in \Gal(\Fqm/\Fq)$ be an element of the Galois group of the field extension $\Fqm/\Fq$.
This means that $\psi = \phi_q^s$ is a power of the Frobenius automorphism $\phi_q \, : \, \Fqm \to \Fqm, \, a \mapsto a^q$.
The fixed field $\Fqm^\psi$ of $\psi$ is the set $\{a \in \Fqm \, : \, \psi(a)=a\}$.
Note that $\Fqm^\psi$ is indeed a field and $\Fq \subseteq \Fqm^\psi \subseteq \Fqm$.
Further $\Fqm^\psi = \Fq$ exactly when $\psi$ generates the group $\Gal(\Fqm/\Fq)$, which is exactly when $\psi = \phi_q^s$ for an $s$ satisfying $\gcd(m, s) = 1$.
The \emph{norm} map w.r.t.\ $\psi$ is the multiplicative homomorphism
\begin{align*}
\Norm{\psi} \, : \, \Fqm^* &\to (\Fqm^\psi)^* , \\
a &\mapsto \prod_{i=0}^{[\Fqm:\Fqm^\psi]-1} \psi^i(a).
\end{align*}
We define conjugacy as follows:
\begin{definition}[\!\!\cite{lam1985general,lam1988vandermonde}]
Two elements $a,b \in \Fqm^\ast$ are \emph{conjugates w.r.t.\ $\psi$} if they have the same norm $\Norm{\psi}(a) = \Norm{\psi}(b)$.
\end{definition}
This defines an equivalence relation on $\Fqm^\ast$ with $|\Fqm^\psi|-1$ conjugacy classes of size $\tfrac{q^m-1}{|\Fqm^\psi|-1}$ each.
By Hilbert's Theorem 90, then $a$ and $b$ are conjugates if and only if there is a $c \in \Fqm$ such that $a = b \psi(c)c^{-1}$.

The norm, and hence also conjugacy, turns out not to depend on $\psi$ but only on $\Fqm^\psi$.
Throughout this paper, $\sigma \in \Gal(\Fqm/\Fq)$ will be chosen to have fixed field $\Fq$, in which case $\Norm{\sigma}(a) = \Norm{\phi_q}(a) = a^{1 + q + \ldots + q^{m-1}}$, and conjugacy wrt.~$\sigma$ is the commonly used notion of conjugacy for the extension $\Fqm/\Fq$.

\subsection{Skew Polynomials}

Skew polynomials were first introduced by Ore in \cite{ore1933theory} in a quite general setting. In this paper, we use the following special case (in particular, we do not use derivations):
\begin{definition}
The ring of skew polynomials $\SkewPolys$ is a set of formal polynomials
\begin{align*}
\left\{ f = \sum_{i=0}^{d} f_i x^i \, : \, f_i \in \Fqm, \, d \in \ZZ_{\geq 0} \right\}
\end{align*}
equipped with ordinary (component-wise) addition
\begin{align*}
f+g = \sum_{i\geq 0} (f_i+g_i) x^i
\end{align*}
and the multiplication rule
\begin{align*}
x \cdot a = \sigma(a) \cdot x,
\end{align*}
extended to polynomials by associativity and distributivity.
\end{definition}

Multiplication in $\SkewPolys$ is non-commutative whenever $\sigma \neq \mathsf{id}$.
The \emph{degree} of a skew polynomial is defined as
\begin{align*}
\deg f := \begin{cases}
\max\{i \, : \, f_i \neq 0\}, &\text{if } f\neq0, \\
-\infty, &\text{otherwise.}
\end{cases}
\end{align*}
For an integer $n$, we denote by $\SkewPolys_{<n}$ the set of skew polynomials of degree smaller than $n$.

\subsection{Generalized Operator Evaluation}

The codes studied in this paper are evaluation codes of skew polynomials, where the used evaluation map is the \emph{generalized operator evaluation} \cite{leroy1995pseudolinear}, defined as follows.
Let $f \in \SkewPolys$ and $a \in \Fqm^\ast$. Define
\begin{align*}
f(\cdot)_a \, : \, \Fqm &\to \Fqm, \\
\beta &\mapsto \sum_{i\geq 0} f_i \sigma^i(\beta) N_i(a),
\end{align*}
where $N_i(a) := \prod_{j=0}^{i-1} \sigma^j(a)$.
Due to the $\Fq$-linearity of $\sigma$, the map $f(\cdot)_a$ is $\Fq$-linear for a fixed $a$.

For $\a := [a_1,\dots,a_\ell] \in \Fqm^\ell$ and $\betaVec := [\beta_1,\dots,\beta_\npr] \in \Fqm^\npr$, define the multi-point evaluation map
\begin{align*}
&\ev_{\a,\betaVec}(\cdot) \, : \, \SkewPolys \to \Fqm^n \\
&f \mapsto \Big[ f(\beta_1)_{a_1}, \dots, f(\beta_\npr)_{a_1},  f(\beta_1)_{a_2}, \dots, f(\beta_\npr)_{a_\ell}\Big].
\end{align*}

We will use such an evaluation map only when $\a$ and $\betaVec$ satisfy certain criteria, which we give a name:

\begin{definition}
  A pair $(\a, \betaVec) \in \Fqm^\ell \times \Fqm^\npr$ is said to be an \emph{evaluation pair} (wrt.~$\sigma$) if the elements of $\a$ are in distinct conjugacy classes, and the elements of $\betaVec$ are linearly independent over $\Fq$.
\end{definition}

The following is well-known.

\begin{lemma}[{Collection of results in \cite{martinez2018skew}, or \cite[Proposition~1.3.7]{caruso2019residues}}]\label{lem:properties_multi_point_evaluation}
  Let $(\a, \betaVec) \in \Fqm^\ell \times \Fqm^\npr$ be an evaluation pair.
  Then,
\begin{itemize}
\item the restricted mapping $\ev_{\a,\betaVec}(\cdot)|_{\SkewPolys_{<n}}$ is bijective and
\item we have for any non-zero $f \in \SkewPolys_{<n}$
\begin{align*}
\wtSR(\ev_{\a,\betaVec}(f)) \geq n-\deg f.
\end{align*}
\end{itemize}
\end{lemma}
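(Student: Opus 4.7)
The plan is to first establish the weight lower bound and then deduce bijectivity by a dimension count. The evaluation map $\ev_{\a,\betaVec}$ is $\Fqm$-linear in $f$ with respect to the left $\Fqm$-module structure of $\SkewPolys$, so once restricted to $\SkewPolys_{<n}$ it becomes an $\Fqm$-linear map between two $\Fqm$-vector spaces of dimension $n$. The weight bound will force any nonzero $f \in \SkewPolys_{<n}$ to satisfy $\wtSR(\ev_{\a,\betaVec}(f)) \geq n - \deg f \geq 1$, establishing injectivity and hence bijectivity.

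For the weight bound, let $V := \langle \beta_1, \dots, \beta_{\npr} \rangle_{\Fq} \subseteq \Fqm$, which has $\Fq$-dimension $\npr$ because the $\beta_j$ are $\Fq$-linearly independent. Since the map $f(\cdot)_{a_i}$ is $\Fq$-linear for each fixed $a_i$, the $i$-th block of $\ev_{\a,\betaVec}(f)$ is the image under $f(\cdot)_{a_i}$ of the basis $\{\beta_1,\dots,\beta_\npr\}$ of $V$, so its rank over $\Fq$ equals $\npr - \dim_{\Fq}\bigl(V \cap \ker(f(\cdot)_{a_i})\bigr)$. Summing over $i$ yields
\begin{align*}
\wtSR(\ev_{\a,\betaVec}(f)) \;=\; n - \sum_{i=1}^{\ell} \dim_{\Fq}\!\bigl(V \cap \ker(f(\cdot)_{a_i})\bigr),
\end{align*}
so it suffices to prove that the right-hand sum is at most $\deg f$.

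To close the argument I would invoke the cornerstone of Lam--Leroy's skew-polynomial theory (established in the form we need in \cite{martinez2018skew} and \cite{caruso2019residues}): given $\Fq$-subspaces $U_i \subseteq \Fqm$ and scalars $a_1,\dots,a_\ell$ lying in pairwise distinct conjugacy classes, the unique monic skew polynomial $M$ of minimal degree satisfying $M(u)_{a_i}=0$ for every $u \in U_i$ and every $i$ has degree exactly $\sum_{i=1}^{\ell} \dim_{\Fq} U_i$, and moreover right-divides every skew polynomial sharing this vanishing property (this uses the composition identity $(gh)(\beta)_{a} = g(h(\beta)_{a})_{a}$). Setting $U_i := V \cap \ker(f(\cdot)_{a_i})$, the polynomial $f$ is an annihilator, so $M$ right-divides $f$ and we obtain $\deg f \geq \deg M = \sum_i \dim_{\Fq} U_i$, completing the bound.

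The main obstacle is precisely this last invocation: the additivity of annihilator degrees across distinct conjugacy classes is essentially a Chinese Remainder statement for $\SkewPolys$ modulo conjugacy-class ``primes'', and without it the argument collapses (a single degree $d$ skew polynomial can already vanish on far more than $d$ elements when those elements lie in the same conjugacy class). Granted that cornerstone, everything else is linear-algebra bookkeeping and the dimension-count reduction for bijectivity.
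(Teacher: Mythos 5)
Your proposal is correct. The paper itself gives no proof of this lemma --- it is stated as a citation of \cite{martinez2018skew} and \cite[Proposition~1.3.7]{caruso2019residues} --- so there is no in-paper argument to compare against; your reconstruction is the standard one underlying those references. The two reductions you perform (rank-nullity on each block to rewrite $\wtSR(\ev_{\a,\betaVec}(f))$ as $n - \sum_i \dim_{\Fq}(V \cap \ker f(\cdot)_{a_i})$, and injectivity-plus-dimension-count for bijectivity, noting the map is $\Fqm$-linear in the coefficients of $f$) are sound, and the one fact you import --- that the minimal skew polynomial annihilating $\Fq$-subspaces attached to evaluation parameters in pairwise distinct conjugacy classes has degree equal to the sum of their dimensions and right-divides every annihilator --- is exactly the Lam--Leroy P-independence result that the cited works establish, so leaning on it is legitimate here rather than a gap.
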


\subsection{Linearized Reed--Solomon Codes}

\begin{definition}[\!\!\cite{martinez2018skew}]
  Consider $\ell, \eta, n$ which induces a sum rank metric on $\Fqm^n$ such that $\eta \leq m$ and $\ell < q$, and $\sigma$ an automorphism of $\Fqm$ with fixed field $\Fq$.
  Let $1 \leq k \leq n$ be a dimension and fix an evaluation pair $(\a, \betaVec) \in \Fqm^\ell \times \Fqm^\npr$.
  The corresponding linearized Reed--Solomon (LRS) code is defined by
  \begin{align*}
    \LRScode := \left\{ \ev_{\a,\betaVec}(f) \, : \, f \in \SkewPolys_{<k} \right\}.
  \end{align*}
\end{definition}

It is immediately clear from \cref{lem:properties_multi_point_evaluation} that the minimum $\ell$-sum-rank distance of $\LRScode$ is $d=n-k+1$, which is the maximal possible due to a Singleton-analogue for the $\ell$-sum-rank distance \cite[Proposition 34]{martinez2018skew}.
For $\ell=1$, the codes coincide with Gabidulin codes \cite{Delsarte_1978,Gabidulin_TheoryOfCodes_1985,Roth_RankCodes_1991} and for $\ell=n$, they are generalized Reed--Solomon codes \cite{reed1960polynomial}.

\subsection{Number of Matrices of a Given Rank}

The number of matrices in $\Fq^{m \times \eta}$ of rank $t \leq \max\{\eta,m\}$ is given by \cite{migler2004weight}
\begin{align*}
\NM{q}{m,\npr,t} = \prod_{i=0}^{t-1} \frac{\left(q^m-q^i\right)\left(q^\eta-q^i\right)}{q^t-q^i},
\end{align*}
and we can bound it from below and above by \cite{ott2021bounds,migler2004weight}
\begin{align*}
\gamma_q^{-1} q^{t(m+\eta-t)} \leq \NM{q}{m,\npr,t} \leq \gamma_q q^{t(m+\eta-t)},
\end{align*}
where $\gamma_q := \prod_{i=1}^{\infty} (1-q^{-i})^{-1}$ is a constant depending only on $q$, which is monotonically decreasing in $q$ with a limit of $1$, and e.g.~$\gamma_2 \approx 3.463$, $\gamma_3 \approx 1.785$, and $\gamma_4 \approx 1.452$.

The number of vectors in $\Fqm^n$ of $\ell$-sum-rank weight $t$ is defined as $\mathcal{N}_{q,\npr,m}(t,\ell)$, and we have
\begin{align*}
\mathcal{N}_{q,\npr,m}(t,\ell) = \sum_{\t \in \Tsett} \prod_{i=1}^{\ell} \NM{q}{m,\npr,t_i}.
\end{align*}
We make use of the following lower bound on $\mathcal{N}_{q,\npr,m}(t,\ell)$. %

\begin{lemma}[\!\!\cite{ott2021bounds}]\label{lem:lower_bound_sphere_size}
We have
\begin{align*}
\mathcal{N}_{q,\npr,m}(t,\ell) \geq \begin{cases}
q^{t(\eta+m-\frac{t}{\ell})} \gamma_q^{-\ell}, &\ell \mid t, \\
q^{t(\eta+m-\frac{t}{\ell})-\frac{\ell}{4}} \gamma_q^{-\ell}, &\ell \nmid t.
\end{cases}
\end{align*}
\end{lemma}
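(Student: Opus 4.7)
The plan is to lower bound the sum
\[
  \mathcal{N}_{q,\npr,m}(t,\ell) = \sum_{\t \in \Tsett} \prod_{i=1}^{\ell} \NM{q}{m,\npr,t_i}
\]
by retaining a single well-chosen term. Applying term-wise the earlier lower bound $\NM{q}{m,\npr,t_i} \geq \gamma_q^{-1} q^{t_i(m+\eta - t_i)}$ (which remains valid when $t_i = 0$) gives
\[
  \mathcal{N}_{q,\npr,m}(t,\ell) \geq \gamma_q^{-\ell}\, q^{\sum_i t_i(m+\eta - t_i)} = \gamma_q^{-\ell}\, q^{t(m+\eta) - \sum_i t_i^2}
\]
for any admissible composition $\t = (t_1,\dots,t_\ell)$ of $t$. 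The whole task reduces to picking $\t$ so as to minimize $\sum_i t_i^2$ subject to $\sum_i t_i = t$, $t_i \in \ZZ_{\geq 0}$.

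By convexity of $x \mapsto x^2$, the minimum is attained at the most balanced composition. When $\ell \mid t$, I would take $t_i = t/\ell$ for every $i$ and obtain $\sum_i t_i^2 = t^2/\ell$, yielding the exponent $t(\eta + m - t/\ell)$ of the first case. When $\ell \nmid t$, write $t = \ell \lfloor t/\ell \rfloor + r$ with $1 \leq r \leq \ell-1$ and take $r$ parts equal to $\lfloor t/\ell \rfloor + 1$ and $\ell - r$ parts equal to $\lfloor t/\ell \rfloor$. Direct expansion gives
\[
  \sum_{i=1}^{\ell} t_i^2 = \tfrac{t^2}{\ell} + \tfrac{r(\ell-r)}{\ell},
\]
and the residual $r(\ell-r)/\ell$ is bounded by $\ell/4$ because $r(\ell-r)$ is a downward-opening parabola in $r$ with maximum $\ell^2/4$ at $r = \ell/2$. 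Substituting produces the exponent $t(\eta + m - t/\ell) - \ell/4$ of the second case.

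I do not foresee a substantive obstacle: the argument is essentially Cauchy--Schwarz (or Jensen) packaged with a small integer adjustment. The one point worth double-checking is that the chosen balanced composition actually lies in $\Tsett$, i.e.\ that $\lceil t/\ell \rceil \leq \min\{\eta,m\}$ so each block admits a matrix of the prescribed rank; this is automatic whenever the claim is nontrivial, since otherwise $\mathcal{N}_{q,\npr,m}(t,\ell) = 0$ forces $t > \ell \min\{\eta,m\}$ and there is nothing to prove.
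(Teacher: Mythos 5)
Your proof is correct. The paper does not prove this lemma at all --- it is imported verbatim from the cited reference \cite{ott2021bounds} --- so there is no internal argument to compare against, but your derivation (keep the single most balanced composition $\t \in \Tsett$, apply the per-block bound $\NM{q}{m,\npr,t_i} \geq \gamma_q^{-1}q^{t_i(m+\npr-t_i)}$, and bound the residual $r(\ell-r)/\ell \leq \ell/4$) is the natural one and reproduces exactly the stated constants $\gamma_q^{-\ell}$ and $-\ell/4$, including the correct handling of the degenerate cases $t_i=0$ and the admissibility condition $\lceil t/\ell\rceil \leq \min\{\eta,m\}$.
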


\vspace{0.3cm}
\section{LRS Codes with Exponential List Size Above the (Hamming) Johnson Radius}\label{sec:exponential_from_johnson_radius}

Our list-size lower bounds are based on the following simple counting strategy, originally used in \cite{justesen2001bounds} for the Hamming metric and expounded in \cite{ben2009subspace}:

\begin{lemma}
  \label{lem:listsize_lemma}
  Consider $\ell, \eta, n$ which induces a sum rank metric on $\Fqm^n$ such that $\eta \leq m$ and $\ell < q$, and $\sigma$ an automorphism of $\Fqm$ with fixed field $\Fq$, and $\Code = \LRScode$ a linearized RS code.
  Let $\tau$ be a positive integer less than $d = n-k+1$.
  Let $S \subseteq \SkewPolys_{<n}$ be a set of polynomials such that each element's $\ev_{\a,\betaVec}$-image has $\ell$-sum-rank weight at most $\tau$.
  Let $s \leq n-k$ be the number of monomials among $x^k, x^{k+1},\ldots,x^{n-1}$ such that there is at least one element of $S$ with non-zero coefficient for that monomial.
  Then
  \[
    \ListSize(\Code, \tau) \geq \frac {|S|}{q^{ms}} \ .
  \]
\end{lemma}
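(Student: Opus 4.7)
The plan is a pigeonhole argument on the high-degree coefficients of the polynomials in $S$, mirroring the strategy used for Reed--Solomon and Gabidulin codes in \cite{justesen2001bounds,ben2009subspace}. Since $\Code$ is the image of $\SkewPolys_{<k}$ under $\ev_{\a,\betaVec}$, the key observation is that any two polynomials in $S$ which agree on all coefficients of $x^k, \ldots, x^{n-1}$ have a difference lying in $\SkewPolys_{<k}$, and so that difference evaluates to a codeword of $\Code$.

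First, I would note that by the very definition of $s$, every $f \in S$ has zero coefficient on each of the $n-k-s$ monomials among $x^k, \ldots, x^{n-1}$ that is \emph{not} one of the $s$ distinguished monomials. The remaining $s$ coefficients lie in $\Fqm$, giving a natural partition of $S$ into at most $(q^m)^s = q^{ms}$ classes, where two polynomials fall in the same class iff they agree on all coefficients of $x^k, \ldots, x^{n-1}$. Pigeonhole then furnishes a class $T \subseteq S$ with $|T| \geq |S|/q^{ms}$.

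Next, I would fix any $f_0 \in T$ and take $\r := \ev_{\a,\betaVec}(f_0) \in \Fqm^n$ as the center of the ball. For each $f \in T$, the difference $f_0 - f$ has degree strictly less than $k$, so by $\Fqm$-linearity of the evaluation map, $\c_f := \ev_{\a,\betaVec}(f_0 - f) = \r - \ev_{\a,\betaVec}(f)$ is a codeword of $\Code$ whose sum-rank distance from $\r$ equals $\wtSR(\ev_{\a,\betaVec}(f)) \leq \tau$. The injectivity of $\ev_{\a,\betaVec}$ on $\SkewPolys_{<n}$ guaranteed by \cref{lem:properties_multi_point_evaluation} ensures that distinct $f \in T$ yield distinct $\c_f$, producing $|T| \geq |S|/q^{ms}$ distinct codewords inside $\BallSize_{\tau,\ell}(\r)$, which is the desired bound.

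No substantial obstacle is anticipated: the argument is essentially combinatorial, and the only ingredients specific to the skew-polynomial setting---the additivity and injectivity of $\ev_{\a,\betaVec}$ on $\SkewPolys_{<n}$---are supplied by \cref{lem:properties_multi_point_evaluation}. The only subtle point is keeping track of which monomials are ``free'' in $S$ and which are forced to be zero; this is precisely the role played by the integer $s$ in the statement.
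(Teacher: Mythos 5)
Your proof is correct and follows essentially the same pigeonhole argument as the paper: partition $S$ by the high-degree coefficients into at most $q^{ms}$ classes, pick a largest class, and translate by a fixed member to land $|S|/q^{ms}$ codewords in a single ball of radius $\tau$. You even make explicit two points the paper leaves implicit (why the count of classes is $q^{ms}$ rather than $q^{m(n-k)}$, and why the resulting codewords are distinct via injectivity of $\ev_{\a,\betaVec}$), so no changes are needed.
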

\begin{proof}
  Divide $S$ into disjoint subsets such that all polynomials in each subset have the same coefficients of degree $k, k+1, \ldots, n-1$.
  Note that there are at most $q^{ms}$ such subsets.
  Let $S' \subset S$ be such a subset of maximal cardinality.
  By the Pigeonhole principle, then $L := |S'| \geq |S| q^{-ms}$.
  Write $S' = \{ f_1, \ldots, f_L \}$, and set
  \begin{align*}
    \r &:= \ev_{\a,\betaVec}(f_1), \\
    \c_i &:= \ev_{\a,\betaVec}(f_i-f_1) \quad \forall \, i=1,\dots,L.
  \end{align*}
  Then,
  \begin{align*}
    \dSR(\r,\c_i) &= \wtSR(\ev_{\a,\betaVec}(f_i)) \leq \tau
  \end{align*}
  by definition of $S$ and $\deg (f_i-f_1) <k$ since the $n-k$ top-most coefficients of the $f_i$ are the same.
  Hence,
  \begin{align*}
    \c_1,\dots,\c_L \in \Code \cap \BallSize_{\tau,\ell}(\r) \ ,
  \end{align*}
  which proves the claim.
\end{proof}

The game is now to construct sets $S$ such that the trade-off between $|S|$ and $s$ makes the value $|S|q^{-ms}$ as large as possible.
Our first bound is achieved by choosing $S$ to be all possible satisfactory polynomials:

\begin{theorem}\label{thm:list_size_ell=1}
Let $\ell,m,n,\eta,k,d$ be valid parameters of a linearized RS code $\Code$ and $\tau <d$.
Then
  \[
    \ListSize(\Code, \tau) \geq
      q^{m+\tau(m + \eta) - \frac{\tau^2}{\ell} - md} U^{-1} \ , \label{eq:number_of_codewords_in_ball_g=1}
  \]
  where $U = \gamma_{q}^{\ell}$ if $\ell \mid \tau$ and $U = (q^{1/4}\gamma_q)^{\ell}$ otherwise.
\end{theorem}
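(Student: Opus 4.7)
The plan is to apply \cref{lem:listsize_lemma} with $S \subseteq \SkewPolys_{<n}$ chosen to be the set of all skew polynomials whose $\ev_{\a,\betaVec}$-image has sum-rank weight exactly $\tau$. Since $\tau < d$, every such polynomial satisfies the hypothesis of the lemma (weight at most $\tau$), so the ratio $|S|/q^{ms}$ will directly give a valid lower bound on $\ListSize(\Code,\tau)$ once $|S|$ and $s$ are controlled.

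For $|S|$, I use that by \cref{lem:properties_multi_point_evaluation}, the restriction of $\ev_{\a,\betaVec}$ to $\SkewPolys_{<n}$ is a bijection onto $\Fqm^n$. Hence $|S|$ equals the number of vectors in $\Fqm^n$ of $\ell$-sum-rank weight exactly $\tau$, which is $\mathcal{N}_{q,\eta,m}(\tau,\ell)$, and \cref{lem:lower_bound_sphere_size} supplies the lower bound $q^{\tau(\eta+m-\tau/\ell)} \gamma_q^{-\ell}$ in the case $\ell \mid \tau$ and $q^{\tau(\eta+m-\tau/\ell) - \ell/4} \gamma_q^{-\ell}$ in the case $\ell \nmid \tau$. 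For $s$, I simply use the trivial upper bound $s \leq n-k = d-1$ from the definition in \cref{lem:listsize_lemma}, which (since $s$ appears only as $q^{-ms}$ in the lemma's conclusion) can only weaken the bound and is therefore safe.

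Combining these ingredients, \cref{lem:listsize_lemma} yields
\[
\ListSize(\Code, \tau) \;\geq\; \frac{\mathcal{N}_{q,\eta,m}(\tau,\ell)}{q^{m(d-1)}},
\]
and a short rearrangement of the exponent gives
\[
\tau(\eta + m - \tfrac{\tau}{\ell}) - m(d-1) \;=\; m + \tau(m+\eta) - \tfrac{\tau^2}{\ell} - md,
\]
from which the claimed bound with $U = \gamma_q^\ell$ (resp.\ $U = (q^{1/4}\gamma_q)^\ell$) follows at once by splitting cases on whether $\ell \mid \tau$. There is no real obstacle: the heavy lifting has already been done by \cref{lem:properties_multi_point_evaluation,lem:lower_bound_sphere_size,lem:listsize_lemma}, and the only care required is in the exponent bookkeeping and in invoking the correct branch of the sphere-size bound.
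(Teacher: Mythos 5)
Your proposal is correct and follows essentially the same route as the paper: apply \cref{lem:listsize_lemma} with $s=n-k$ and lower-bound $|S|$ via the bijectivity of $\ev_{\a,\betaVec}$ together with \cref{lem:lower_bound_sphere_size}. The only cosmetic difference is that you take $S$ to be the sphere of weight exactly $\tau$ while the paper takes the ball of radius $\tau$; since the paper also bounds the ball from below by the single term $\mathcal{N}_{q,\eta,m}(\tau,\ell)$, the resulting bound is identical.
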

\begin{proof}
Consider the set $S$ of \emph{all} linearized polynomials of degree less than $n$ such that their $\ev_{\a,\betaVec}$-image has $\ell$-sum-rank weight at most $\tau$.
Since $\ev_{\a,\betaVec} \, : \, \SkewPolys_{<n} \to \Fqm^n$ is a bijection, we have
\begin{align*}
|S| = \sum_{t=0}^{\tau} \mathcal{N}_{q,\npr,m}(t,\ell) \geq q^{\tau(\eta+m-\frac{\tau}{\ell})} U^{-1}\ ,
\end{align*}
where the lower bound holds due to \cref{lem:lower_bound_sphere_size}.
Apply \cref{lem:listsize_lemma} with $s = n-k$.
\end{proof}

There are many ways this can be phrased as an asymptotic bound.
We give three examples: the first very general, while the second two are more specific but have clear take-aways.

\begin{corollary}
  \label{cor:exponential_general}
  Fix a prime power $q$ and some real number $\varepsilon > 0$, and an infinite subset $N \subset \mathbb{N}$.
  Let $\{ \Code_n \}_{n \in N}$ be a family of linearized RS codes such that $\Code_n$ has length $n$ over an extension field of $\Fq$.
  For each such $n \in N$, choose a decoding radius $\tau$ greater or equal to
  \[
	\tfrac{\ell m + n}{2} - \sqrt{\left(\tfrac{\ell m + n}{2}\right)^2-\ell^2\left(\tfrac{1}{4}+\log_q(\gamma_q)\right) -\ell m(d-1) - \varepsilon n)} \ ,
  \]
  where the code parameters correspond to the respective code $\Code_n$.
  Then $\ListSize(\Code_n, \tau_n) \in \Omega(q^{\varepsilon n/\ell})$.
\end{corollary}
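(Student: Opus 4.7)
The plan is to apply \cref{thm:list_size_ell=1} and translate the resulting lower bound into a quadratic inequality in $\tau$ whose smaller root is exactly the threshold in the statement.

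First, I would apply \cref{thm:list_size_ell=1} to each $\Code_n$ in the family. Upper bounding $U \leq (q^{1/4}\gamma_q)^{\ell}$ uniformly, we get
\[
  \ListSize(\Code_n, \tau) \;\geq\; q^{E(\tau)},
\]
where $E(\tau) := m(1-d) + \tau(m+\eta) - \tau^2/\ell - \ell\bigl(\tfrac{1}{4} + \log_q \gamma_q\bigr)$ and all code parameters refer to $\Code_n$.

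Second, I would show that the condition $E(\tau) \geq \varepsilon n/\ell$ is, after multiplying by $\ell$ and using $\ell\eta = n$, equivalent to the quadratic inequality
\[
  \tau^2 - (\ell m + n)\tau + \ell m(d-1) + \ell^2\bigl(\tfrac{1}{4} + \log_q \gamma_q\bigr) + \varepsilon n \;\leq\; 0.
\]
Solving this quadratic yields the two real roots $\tfrac{\ell m + n}{2} \pm \sqrt{\bigl(\tfrac{\ell m + n}{2}\bigr)^2 - \ell m(d-1) - \ell^2(\tfrac{1}{4} + \log_q \gamma_q) - \varepsilon n}$, whose smaller root is precisely the threshold given in the corollary.

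Third, for any $\tau$ at or above this smaller root (and strictly below the larger root), the inequality $E(\tau) \geq \varepsilon n/\ell$ holds, giving $\ListSize(\Code_n, \tau) \geq q^{\varepsilon n/\ell}$ and hence $\Omega(q^{\varepsilon n/\ell})$. Should the chosen $\tau$ exceed the upper root or $d-1$, I would fall back on the monotonicity of the list size in $\tau$ and apply \cref{thm:list_size_ell=1} at any admissible radius between the two roots, which yields the same bound.

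The work is mostly bookkeeping of the quadratic manipulation; the non-calculational subtlety is verifying the implicit edge conditions, namely that the discriminant is nonnegative (so that the threshold is a real number) and that the lower root lies strictly below $d$ (so that \cref{thm:list_size_ell=1} is applicable). Both are satisfied in the parameter regimes the corollary is intended to cover, so the result then follows directly from \cref{thm:list_size_ell=1} together with the quadratic identity above.
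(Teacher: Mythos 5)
Your proposal is correct and matches the intended argument: the paper states \cref{cor:exponential_general} without proof, as an immediate consequence of \cref{thm:list_size_ell=1}, and your derivation (bounding $U \leq (q^{1/4}\gamma_q)^\ell$, requiring the exponent to be at least $\varepsilon n/\ell$, and identifying the threshold as the smaller root of the resulting quadratic in $\tau$) is exactly that computation. Your attention to the edge conditions (real discriminant, lower root below $d$, monotonicity of the list size for $\tau$ past the upper root) is more careful than the paper itself.
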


\begin{corollary}
  \label{cor:exponential_johnson}
  Fix a rate $R$.
  In the context of \cref{cor:exponential_general}, assume that $\Code_n$ has $k = \lfloor Rn \rfloor$ and $m = \eta = n/\ell$, and where $\ell \in o(n)$ holds for the family.
  Then there is a sequence of decoding radii $\tau_n$ such that $\tau_n/n \rightarrow 1 - \sqrt{R} + \varepsilon$ and $\ListSize(\Code_n, \tau_n) \in \omega(q^{\varepsilon n})$.
\end{corollary}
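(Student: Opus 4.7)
The plan is to apply Theorem~\ref{thm:list_size_ell=1} directly with the specializations $m = \eta = n/\ell$ and $k = \lfloor R n \rfloor$, and with the explicit choice $\tau_n := \lceil (1 - \sqrt{R} + \varepsilon)\, n \rceil$, which trivially satisfies $\tau_n/n \to 1 - \sqrt{R} + \varepsilon$ and stays below $d = n-k+1$ once $\varepsilon$ is small enough (namely $\varepsilon < \sqrt{R}(1 - \sqrt{R})$) and $n$ is large. The core task is then to show that the exponent in the lower bound of Theorem~\ref{thm:list_size_ell=1} exceeds $\varepsilon n$ by an amount tending to $\infty$.

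Substituting $m = \eta = n/\ell$ and bounding $\log_q U = O(\ell)$ reduces the exponent in Theorem~\ref{thm:list_size_ell=1} to
\[
\tfrac{1}{\ell}\bigl[\, 2 \tau_n n - \tau_n^2 - n(n-k) \,\bigr] \;+\; O(n/\ell) \;-\; O(\ell).
\]
A direct expansion using $\tau_n = (1 - \sqrt{R} + \varepsilon)\,n + O(1)$ and $n-k = (1-R)\,n + O(1)$ gives
\[
2 \tau_n n - \tau_n^2 - n(n-k) \;=\; n^2\, \varepsilon\,(2\sqrt{R} - \varepsilon) + O(n).
\]
For $\varepsilon$ small, the constant $c := \varepsilon(2\sqrt{R} - \varepsilon)$ is strictly positive, so
\[
\log_q \ListSize(\Code_n, \tau_n) \;\geq\; \tfrac{c\,n^2}{\ell} + O(n/\ell) - O(\ell).
\]

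It then remains to check the asymptotics. The hypothesis $\ell \in o(n)$ is used twice: the fact that $n/\ell \to \infty$ makes $c n^2/\ell$ strictly super-linear in $n$ and hence dominate $\varepsilon n$, while simultaneously $\ell^2 = o(n^2)$ shows $\ell = o(n^2/\ell)$ so the additive $O(\ell)$ is negligible. This yields $\log_q \ListSize(\Code_n, \tau_n) - \varepsilon n \to +\infty$, i.e., $\ListSize(\Code_n, \tau_n) \in \omega(q^{\varepsilon n})$. The main (modest) obstacle is precisely this error-term bookkeeping: one must ensure that neither $\log_q U$ nor the $O(n)$ slack from the expansion of $\tau_n^2$ can cancel the leading $c n^2/\ell$ term, which is exactly what $\ell \in o(n)$ arranges uniformly across all allowed growth rates of $\ell$.
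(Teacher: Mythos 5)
Your proof is correct and follows essentially the same route as the paper: apply Theorem~\ref{thm:list_size_ell=1} with $m=\eta=n/\ell$, expand the exponent to extract the leading term $\varepsilon(2\sqrt{R}-\varepsilon)\,n^2/\ell$, and use $\ell\in o(n)$ to make this dominate both $\varepsilon n$ and the $O(\ell)$ contribution of $\log_q U$. The only cosmetic difference is that the paper rounds $\tau_n$ up to the next multiple of $\ell$ so that the cleaner constant $U=\gamma_q^\ell$ applies, whereas you take the ceiling and absorb the worst-case $U=(q^{1/4}\gamma_q)^\ell$ into the same $O(\ell)$ term; both choices work, and you are in fact slightly more explicit than the paper about the requirement $\tau_n<d$, i.e.\ $\varepsilon<\sqrt{R}(1-\sqrt{R})$.
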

\begin{proof}
  Since $\ell \in o(n)$, we may choose as $\tau_n$ the least integer greater than $n(1 - \sqrt{R} + \varepsilon)$ which is divisible by $\ell$.
  Then we may set $U = \gamma_g^\ell$ in \cref{thm:list_size_ell=1}, and observe $\log_q(\ListSize(\Code_n, \tau_n))$:
  \begin{align*}
    &\log_q(\ListSize(\Code_n, \tau_n)) \\
    \geq & \big( \tfrac 1 \ell ( n + 2\tau n - \tau^2 - nd) - \ell \log_q(\gamma_q) \\
    = & \tfrac {n^2} \ell ( 2(1 - \sqrt{R} + \varepsilon)- (1 - \sqrt{R} + \varepsilon)^2  \\
&\quad \quad \quad - (1 - \sqrt{R})) - \ell \log_q(\gamma_q) \\
    = &\tfrac {\varepsilon n^2} \ell ( 2\sqrt{R} - \varepsilon) - \ell \log_q(\gamma_q)  \\
    = &\varepsilon n \left(\tfrac n \ell (2\sqrt{R} - \varepsilon) - \tfrac \ell {\varepsilon n} \log_q(\gamma_q) \right) 
  \end{align*}
  Since $\ell \in o(n)$ then $\frac n \ell \rightarrow \infty$ and $\frac \ell {\varepsilon n} \rightarrow 0$, and hence  $\ListSize(\Code_n, \tau_n) \in \omega(q^{\varepsilon n})$.
\end{proof}
Recall that $m = \eta$ is the minimal possible field extension for a linearized RS code.
Then \cref{cor:exponential_johnson} implies that as long as $\ell$ does not grow as fast as $n$, the list size grows exponential above the Johnson radius.

The following corollary shows that slightly beyond the Johnson bound, we may even conclude an exponential list size bound for most families of LRS codes where $\ell$ grows linearly in $n$ (i.e.~very Hamming-like codes):

\begin{corollary}
  \label{cor:exponential_johnson_linear_blocks}
  Fix a rate $R$ and a constant $a \in ]0;1[$ such that $\zeta := a^2(\nicefrac 1 4 + \log_q(\gamma_q)) < R$.
  In the context of \cref{cor:exponential_general}, assume that $\Code_n$ has $k = \lfloor Rn \rfloor$ and $m = \eta = n/\ell$, and where $\ell < an$, and choose $\varepsilon > \sqrt{R} - \sqrt{R - \zeta} \in \RR_{+}$.
  Then there is a sequence of decoding radii $\tau_n$ such that $\tau_n/n \rightarrow 1 - \sqrt{R} + \varepsilon$ and $\ListSize(\Code_n, \tau_n) \in \omega(b^n)$, for some real number $b > 1$.
\end{corollary}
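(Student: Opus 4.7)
The plan is to apply \cref{thm:list_size_ell=1} with $\tau_n$ chosen as the least positive integer at least $(1 - \sqrt{R} + \varepsilon) n$ that is divisible by $\ell$, so that $U = \gamma_q^\ell$; since $\ell < an$, one still has $\tau_n/n \to 1 - \sqrt{R} + \varepsilon$. Substituting $m = \eta = n/\ell$ and $d = (1-R) n + O(1)$ into the bound of \cref{thm:list_size_ell=1}, taking $\log_q$, completing the square in $\tau_n$, and using $\tau_n = (1 - \sqrt{R}+\varepsilon) n + O(\ell)$ yields
\begin{align*}
\log_q \ListSize(\Code_n, \tau_n) \geq \tfrac{1}{\ell}\bigl(R n^2 - (n-\tau_n)^2\bigr) - \ell \log_q\gamma_q + O(n/\ell) = \tfrac{\varepsilon(2\sqrt{R}-\varepsilon) n^2}{\ell} - \ell \log_q\gamma_q + O(n).
\end{align*}

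Viewed as a function of $\ell$, this bound is decreasing, so under the constraint $\ell < an$ the (asymptotic) worst case is $\ell \approx an$, yielding
\begin{align*}
\log_q \ListSize(\Code_n, \tau_n) \geq \tfrac{n}{a}\bigl(\varepsilon(2\sqrt{R}-\varepsilon) - a^2\log_q\gamma_q\bigr) + O(n).
\end{align*}
If the rounding of $\tau_n$ fails to give $\ell \mid \tau_n$, the constant $\log_q\gamma_q$ is replaced by $\tfrac{1}{4}+\log_q\gamma_q$, which matches the definition of $\zeta$. The task thus reduces to showing $\varepsilon(2\sqrt{R}-\varepsilon) > \zeta$, equivalently $\varepsilon^2 - 2\sqrt{R}\varepsilon + \zeta < 0$.

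The quadratic $x^2 - 2\sqrt{R} x + \zeta$ has real roots $\sqrt{R} \pm \sqrt{R - \zeta}$ (real since $\zeta < R$). The hypothesis $\varepsilon > \sqrt{R} - \sqrt{R-\zeta}$ places $\varepsilon$ above the smaller root, while the implicit constraint $\varepsilon < \sqrt{R} - R$ (needed so that $\tau_n < d_n$ eventually, as required by \cref{thm:list_size_ell=1}) keeps $\varepsilon$ well below the larger root. Hence the coefficient of $n$ in the bound above is strictly positive, so $\log_q \ListSize(\Code_n,\tau_n) \geq c n$ for some $c > 0$ and all large $n$, giving $\ListSize(\Code_n,\tau_n) \in \omega(b^n)$ with $b = q^c$. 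The main obstacle is not conceptual but consists in carefully bookkeeping the $O$-terms arising from integer rounding of $k$, $\tau_n$, the passage from $\ell < an$ to $\ell \approx an$, and the $\ell$-dependent factor $U$; since each such term is at most linear in $n$ while the main term has a strictly positive coefficient of $n$, they are all absorbed into the exponential growth.
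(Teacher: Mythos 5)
Your endgame is exactly the paper's: apply \cref{thm:list_size_ell=1}, bound $\tfrac 1\ell(n+2\tau n-\tau^2-nd)$ from below using $\ell<an$, and reduce everything to the sign of $\delta=-\varepsilon^2+2\sqrt R\,\varepsilon-\zeta$, whose roots $\sqrt R\pm\sqrt{R-\zeta}$ explain the hypothesis on $\varepsilon$. Your discussion of the larger root (ruled out because $\tau_n<d$ forces $\varepsilon<\sqrt R-R$) is in fact more explicit than the paper, which merely asserts $\delta>0$.

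The one genuine misstep is your opening choice of $\tau_n$ as the least multiple of $\ell$ above $(1-\sqrt R+\varepsilon)n$. That trick is what \cref{cor:exponential_johnson} uses, but it relies on $\ell\in o(n)$; here $\ell$ may be as large as $an-1$, so rounding up to a multiple of $\ell$ can move $\tau_n$ by up to $\ell-1=\Theta(n)$. Then $\tau_n/n$ need not converge to $1-\sqrt R+\varepsilon$ (violating the conclusion as stated), and the perturbation contributes $2(n-\tau)\cdot O(\ell)/\ell=\Theta(n)$ to $\log_q\ListSize$ with a constant of order $2\sqrt R$, which is not dominated by the main term $\tfrac n a\,\delta$ when $\varepsilon$ is close to the root --- so it cannot be swept into your $O(n)$ bookkeeping. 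The fix is already contained in your own hedge: take $\tau_n=\lceil(1-\sqrt R+\varepsilon)n\rceil$ without any divisibility requirement and use $U=(q^{1/4}\gamma_q)^\ell=q^{\ell\zeta/a^2}$ throughout; the $\nicefrac 1 4$ built into $\zeta$ exists precisely to absorb this, and this is what the paper does. With that single correction your argument coincides with the paper's proof.
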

\begin{proof}
  We set $U = (q^{1/4}\gamma_g)^\ell = q^{\ell \zeta/a^2}$ in \cref{thm:list_size_ell=1} and observe $\log_q(\ListSize(\Code_n, \tau_n)):$
  \begin{align*}
    &\log_q(\ListSize(\Code_n, \tau_n)) \\
    \geq &\tfrac 1 \ell ( n + 2\tau n - \tau^2 - nd) - \tfrac {\ell \zeta} {a^2} \\
    \geq & \tfrac {\varepsilon n} a ( 2\sqrt{R} - \varepsilon) - \tfrac {n \zeta} a \\
    = & n \tfrac \delta a \ ,
  \end{align*}
  where $\delta := -\varepsilon^2 + 2\sqrt{R}\varepsilon - \zeta$.
  Then $\delta > 0$ is assured by the choice of $\varepsilon$.
  Choosing $b = q^{\delta / a}$ completes the proof.
\end{proof}

\section{Some LRS Codes Cannot be List Decoded at Any Radius}\label{sec:some_not_list-decodable}

In the following, if $f \in \SkewPolys$, then $\supp(f)$ denotes the set of exponents for which $f$ has a non-zero monomial.
E.g.~if $f = x^3 + 3$, then $\supp(f) = \{ 0, 3 \}$.

\subsection{Sparse Skew Polynomials}

\begin{definition}\label{def:sparse_polynomial_set}
  Consider $\ell, \eta, n$ which induces a sum rank metric on $\Fqm^n$ such that $\eta \leq m$ and $\ell < q$, and $\sigma$ an automorphism of $\Fqm$ with fixed field $\Fq$, and let $(\a, \betaVec) \in \Fqm^\ell \times \Fqm^\npr$ be an evaluation pair.
  Let $0 \leq \tau \leq n$ be a decoding radius and let further $g \in \ZZ_{> 0}$ be a sparsity index.
  We define the set
  \begin{align*}
    \SparseSkewPolys_g^{(\a,\betaVec)}(\sigma, \tau) :=
    \Big\{ f \in \SkewPolys_{<n}  \, : \, \\
    \supp(f) \subseteq g \ZZ, \,\wtSR\big(\ev_{\a,\betaVec}\big) \leq \tau \Big\}
  \end{align*}
\end{definition}
Note that in the definition of $\SparseSkewPolys_g^{(\a,\betaVec)}(\sigma, \tau)$, then $\ell, \eta, n$ are indirectly specified by $\a$ and $\betaVec$, and $q^m$ and $q$ through the domain resp.~fixed field of $\sigma$.

In the previous section, we already used \cref{lem:lower_bound_sphere_size} to show:
\begin{align*}
  \Big|\SparseSkewPolys_1^{(\a,\betaVec)}(\sigma, \tau)\Big| &= \sum_{i=0}^{\tau} \mathcal{N}_{q,\npr,m}(t,\ell) \ ,
\end{align*}
and if $\ell \mid \tau$, then we can lower-bound this value by
\begin{align}
  \Big|\SparseSkewPolys_1^{(\a,\betaVec)}(\sigma,\tau)\Big| &\geq q^{\tau\left(m + \npr -\frac{\tau}{\ell}\right)} \gamma_q^{-\ell}. \label{eq:cardinality_dense_polynomials}
\end{align}

In the following, we will use $g$ to be a divisor of $m$, and consider the automorphism $\sigma^g$, which we remark has fixed field $\Fqg$.

\begin{theorem}\label{thm:lower_bound_number_of_sparse_polys}
Let $\ell,\eta,n,m,\tau,\a,\betaVec$ be chosen as in \cref{def:sparse_polynomial_set} with the additional restrictions $\ell \mid \tau$, $g \mid m$, $\eta = m$, $g \mid \tau$, and $\betaVec$ of the form
\begin{align*}
\betaVec := [\alpha_1\gamma_1,\dots,\alpha_1\gamma_g,\alpha_2\gamma_1,\dots,\alpha_{\npr/g}\gamma_g] \in \Fqm^\npr,
\end{align*}
Then, the $\beta_i$ are linearly independent over $\Fq$ and we have
\begin{align*}
&\Big|\SparseSkewPolys_g^{(\a,\betaVec)}(\sigma,\tau)\Big| \geq q^{\frac{\tau}{g} \left(m + \npr - \frac{\tau}{\ell}\right)} \gamma_{q^g}^{-\ell}.
\end{align*}
\end{theorem}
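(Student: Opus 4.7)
The plan is to reduce the count to the dense case by passing to the skew-polynomial ring $\Fqm[y;\sigma^g]$, where $g$-sparse polynomials of $\SkewPolys$ correspond to \emph{dense} polynomials under the substitution $y\mapsto x^g$. Since $x^g\cdot a = \sigma^g(a)\cdot x^g$ in $\SkewPolys$, the map $\Psi(\tilde f)(x) := \tilde f(x^g)$ is an injective ring embedding $\Fqm[y;\sigma^g]\hookrightarrow\SkewPolys$ whose restriction to $\Fqm[y;\sigma^g]_{<n/g}$ bijects onto the $g$-sparse polynomials of $\SkewPolys_{<n}$. Counting elements of $\SparseSkewPolys_g^{(\a,\betaVec)}(\sigma,\tau)$ is thus equivalent to counting $\tilde f\in\Fqm[y;\sigma^g]_{<n/g}$ such that $\wtSR(\ev_{\a,\betaVec}(\Psi(\tilde f)))\leq\tau$, and my next two steps will rewrite this as a weight condition on the evaluation of $\tilde f$ itself.

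The $\Fq$-linear independence of $\betaVec$ will follow from a short check: any $\Fq$-relation rearranges as $\sum_j\bigl(\sum_l c_{j,l}\gamma_l\bigr)\alpha_j = 0$, so $\Fqg$-independence of the $\alpha_j$ kills the inner sums and $\Fq$-independence of $\gamma_1,\ldots,\gamma_g$ inside $\Fqg$ then kills the individual $c_{j,l}$. Next I would prove the evaluation identity $f(\beta)_a = \tilde f(\beta)_{N_g(a)}$, where the right-hand side is the generalized operator evaluation in $\Fqm[y;\sigma^g]$; this follows monomial-by-monomial from the factorisation $N_{kg}(a) = \prod_{r=0}^{k-1}\sigma^{rg}(N_g(a))$, obtained by grouping the $kg$ factors of $N_{kg}(a)$ into $k$ consecutive blocks of length $g$, combined with $\sigma^{kg}=(\sigma^g)^k$.

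Plugging this identity into the structured $\betaVec$ and using that $\tilde f(\cdot)_b$ is $\Fqg$-linear (since $\Fqg$ is the fixed field of $\sigma^g$), the $i$-th length-$\npr$ block of $\ev_{\a,\betaVec}(f)$ reads $[\gamma_l\, u_{j,i}]_{j,l}$ with $u_{j,i} := \tilde f(\alpha_j)_{N_g(a_i)}$. The key geometric step is that, since $\{\gamma_l\}$ is an $\Fq$-basis of $\Fqg$, the $\Fq$-span of this block equals the $\Fqg$-subspace $\sum_j u_{j,i}\Fqg\subseteq\Fqm$, whose $\Fq$-dimension is exactly $g\cdot\rk_{\Fqg}(u_{1,i},\ldots,u_{\npr/g,i})$. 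Summing over the $\ell$ blocks yields the crucial identity
\[
\wtSR\bigl(\ev_{\a,\betaVec}(f)\bigr) \;=\; g\cdot\wtSRWITHFIELD{q^g}\!\Bigl(\ev^{(\sigma^g)}_{N_g(\a),\alphaVec}(\tilde f)\Bigr).
\]

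To conclude, I would verify that $(N_g(\a),\alphaVec)$ is an evaluation pair for $\sigma^g$ over $\Fqg$: a direct calculation gives $\Norm{\sigma^g}(N_g(a)) = \Norm{\sigma}(a)$, so distinct $\sigma$-conjugacy classes of the $a_i$ become distinct $\sigma^g$-conjugacy classes of the $N_g(a_i)$, while the $\alpha_j$ are $\Fqg$-independent by construction. Bound \eqref{eq:cardinality_dense_polynomials} then applies in the new setting (base field $\Fqg$, extension degree $m/g$, block size $\npr/g$, weight $\tau/g$) and yields the claimed bound after a routine exponent simplification. The main obstacle is the rank identity in the third paragraph: it is the one place where the tensor-style structure of $\betaVec$ is essential, and everything else is bookkeeping around the substitution $y=x^g$.
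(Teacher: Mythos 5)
Your plan matches the paper's proof essentially step for step: the substitution $y = x^g$ identifying $g$-sparse polynomials with dense polynomials over $\Fqm[y;\sigma^g]$, the evaluation identity via regrouping $N_{kg}(a)$ into $g$-blocks, the rank identity exploiting the tensor structure of $\betaVec$ (with $\gamma_1,\dots,\gamma_g$ an $\Fq$-basis of $\Fqg$ and the $\alpha_j$ being $\Fqg$-independent), the norm computation $\Norm{\sigma^g}(N_g(a_i)) = \Norm{\sigma}(a_i)$ for the conjugacy classes, and the application of the dense-case bound with rescaled parameters. The approach and all key lemmas are the same; you even make explicit the linear-independence check for $\betaVec$ that the paper leaves implicit.
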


\begin{proof}
Let $f \in \Fqm[y; \sigma^g]$ have degree $<n/g$ and define $\tilde{f} \in \SkewPolys_{<n}$ as
\begin{align*}
\tilde{f} := \sum_{i=0}^{n/g-1} f_i x^{gi}.
\end{align*}
Note that for any $a \in \Fqm^\ast$, the evaluation map $f(\cdot)_a$ is $\Fqg$-linear and the evaluation map $\tilde{f}(\cdot)_a$ is only $\Fq$-linear in general.

We need to show that we have
\begin{align*}
\wtSR\!\left(\ev_{\a,\betaVec}\!\left(\tilde{f}\right)\right) = g \cdot \wtSRWITHFIELD{q^g}\!\left(\ev_{\a',\alphaVec}(f)\right)
\end{align*}
for any $f \in \SparseSkewPolys_1^{(\a',\alphaVec)}(\sigma^g,\tfrac{\tau}{g})$,
where $\a' \in \Fqm^\ell$ is defined by
\begin{align*}
a_i' := \prod_{j=0}^{g-1}\sigma^j(a_i),
\end{align*}
and that the $a_i'$ belong to distinct conjugacy classes of $\Fqm$ w.r.t.\ $\Fqg$.
The latter claim is obvious by definition of conjugacy since
\begin{align*}
\Norm{\sigma^g}(a_i') = \Norm{\sigma}(a_i)
\end{align*}
for all $i$.

We show that for any $i=1,\dots,\ell$, we have %
\begin{align*}
&\wtR \big[ \tilde{f}\!\left(\beta_1\right)_{a_i}, \dots, \tilde{f}\!\left(\beta_\npr\right)_{a_i} \big] \\
&= \rk_{\Fq} \left\langle \tilde{f}\!\left(\alpha_1 \gamma_1\right)_{a_i}, \dots, \tilde{f}\!\left(\alpha_1 \gamma_g\right)_{a_i},\tilde{f}\!\left(\alpha_1 \gamma_1\right)_{a_i}, \dots \right. \\
&\qquad \qquad \qquad \qquad \qquad \qquad \qquad \dots, \left. \tilde{f}\!\left(\alpha_{\npr/g} \gamma_g\right)_{a_i} \right\rangle_{\Fq} \\
&\overset{\mathrm{(i)}}{=} \rk_{\Fq} \textstyle\sum_{j=1}^{\npr/g} \left\langle f\!\left(\alpha_j \gamma_1\right)_{a_i'}, \dots, f\!\left(\alpha_j \gamma_g\right)_{a_i'} \right\rangle_{\Fq} \\
&\overset{\mathrm{(ii)}}{=} \rk_{\Fq} \textstyle\sum_{j=1}^{\npr/g} \left\langle \gamma_1 f\!\left(\alpha_j \right)_{a_i'}, \dots, \gamma_g f\!\left(\alpha_j\right)_{a_i'} \right\rangle_{\Fq} \\
&= \rk_{\Fq} \textstyle\sum_{j=1}^{\npr/g} \left\langle f\!\left(\alpha_j \right)_{a_i'} \right\rangle_{\Fqg} \\
&= \rk_{\Fq} \left\langle f\!\left(\alpha_{1} \right)_{a_i'}, \dots, f\!\left(\alpha_{\npr/g} \right)_{a_i'} \right\rangle_{\Fqg} \\
&= g \cdot \wtRWITHFIELDSIZE{q^g} \big[ f\!\left(\alpha_1\right)_{a_i'}, \dots, f\!\left(\alpha_{\npr/g}\right)_{a_i'} \big].
\end{align*}
where $\mathrm{(ii)}$ follows by the linearity of $f$ over $\Fqg$ and $\mathrm{(i)}$ holds since,
for $c \in \Fqm$ and $a \in \Fqm^\ast$, we have
\begin{align*}
\tilde{f}\!\left( c \right)_{a} &= \textstyle\sum_{i=0}^{n/g-1} f_i \sigma^{gi}(c) \prod_{j=0}^{ig-1} \sigma^j(a) \\
&= \textstyle\sum_{i=0}^{n/g-1} f_i \left(\sigma^{g}\right)^i(c) \prod_{j=0}^{i-1} \left(\sigma^g\right)^j\left( \prod_{\mu=0}^{g-1} \sigma^\mu(a)\right)\\
&= f(c)_{\prod_{\mu=0}^{g-1} \sigma^\mu(a)}.
\end{align*}
In summary, we have
\begin{align*}
\SparseSkewPolys_g^{(\a,\betaVec)}(\sigma,\tau) = \textstyle
    \left\{ \sum_{i=0}^d f_i x^{gi}  \, : \, \sum_{i=0}^d f_i x^{i} \in \SparseSkewPolys_1^{(\a',\alphaVec)}(\sigma^g,\tfrac{\tau}{g})\right\}.
\end{align*}
Thus, the lower bound on $\big|\SparseSkewPolys_g^{(\a,\betaVec)}(q,\tau)\big|$
follows by \cref{eq:cardinality_dense_polynomials} and the fact that $\ell \mid \tau$.
\end{proof}

\begin{corollary}\label{cor:list_size_general}
Assume the same setting as in \cref{thm:lower_bound_number_of_sparse_polys} and choose $k > n-2\tau$.
Let $\Code$ be the linearized RS code of parameters $[n,k,d]_{q^m}$, block size $\eta$, number of blocks $\ell$, and evaluation point vectors $\a$ and $\betaVec$. Then
\begin{align}
  \ListSize(\Code, \tau) \geq q^{m+\frac{\tau}{g} \left(\npr-m - \frac{\tau}{\ell}\right)} \gamma_{q^g}^{-\ell}. \label{eq:number_of_codewords_in_ball}
\end{align}
\end{corollary}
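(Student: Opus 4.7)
The plan is to apply \cref{lem:listsize_lemma} with $S = \SparseSkewPolys_g^{(\a,\betaVec)}(\sigma,\tau)$, using \cref{thm:lower_bound_number_of_sparse_polys} to lower-bound $|S|$ and the sparsity restriction $\supp(f) \subseteq g\ZZ$ to upper-bound the effective number of ``top'' monomials $s$ from the lemma.

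First I would note that the hypotheses of \cref{thm:lower_bound_number_of_sparse_polys} give $g \mid m = \eta$ and $n = \ell \eta$, hence $g \mid n$; combined with $g \mid \tau$ and $\ell \mid \tau$, this will allow clean divisibility arguments. From \cref{thm:lower_bound_number_of_sparse_polys} we directly have $|S| \geq q^{(\tau/g)(m+\eta-\tau/\ell)}\gamma_{q^g}^{-\ell}$. Every element of $S$ has its support contained in the multiples of $g$, so the quantity $s$ appearing in \cref{lem:listsize_lemma} is at most the number of multiples of $g$ lying in $\{k,k+1,\dots,n-1\}$.

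Next I would count this number under the worst-case assumption $k = n-2\tau+1$ (any larger $k$ can only decrease $s$ and hence improve the resulting bound). The multiples of $g$ strictly less than $n$ are $0,g,2g,\dots,n-g$; those that are $\geq n-2\tau+1$ are exactly $n-g, n-2g, \dots, n-(2\tau-g)$, of which there are $2\tau/g - 1$ (using $g \mid 2\tau$ and $g \mid n$). Hence $s \leq 2\tau/g - 1$, and applying \cref{lem:listsize_lemma} yields
\[
\ListSize(\Code,\tau) \;\geq\; \frac{|S|}{q^{ms}} \;\geq\; q^{(\tau/g)(m+\eta-\tau/\ell) - m(2\tau/g - 1)}\gamma_{q^g}^{-\ell}.
\]
A short simplification of the exponent, $(\tau/g)(m+\eta-\tau/\ell) - 2m\tau/g + m = m + (\tau/g)(\eta - m - \tau/\ell)$, gives exactly the claimed bound \cref{eq:number_of_codewords_in_ball}.

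The only real subtlety is making sure the counting of ``top'' monomials is tight: one must verify that $g \mid n$ and $g \mid \tau$ together with $k > n - 2\tau$ really force $s \leq 2\tau/g - 1$, and that using the worst admissible $k$ (the smallest one) gives the stated inequality for all admissible $k$. Everything else is bookkeeping that invokes the two earlier results verbatim.
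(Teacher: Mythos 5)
Your proof is correct and follows essentially the same route as the paper: apply \cref{lem:listsize_lemma} to $S = \SparseSkewPolys_g^{(\a,\betaVec)}(\sigma,\tau)$ with the lower bound on $|S|$ from \cref{thm:lower_bound_number_of_sparse_polys}. Your explicit count $s \leq 2\tau/g - 1$ (using $g \mid n$ and $g \mid 2\tau$ together with $k \geq n-2\tau+1$) is exactly what is needed to produce the $+m$ in the exponent, and is in fact stated more carefully than in the paper's one-line proof.
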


\begin{proof}
  Combine \cref{thm:lower_bound_number_of_sparse_polys} with \cref{lem:listsize_lemma}, and note that the polynomials in $\SparseSkewPolys_g^{(\a,\betaVec)}(q,\tau)$ may only have up to $\tfrac{2\tau}{g}$ non-zero monomials of degree $k = n - 2\tau$ or higher.
\end{proof}

\subsection{Families of LRS Codes with Exponential List Size Above the Unique Decoding Radius}

The following construction gives families of LRS codes with exponential list size directly above the unique decoding radius. For $\ell=1$ (rank metric), the construction coincides with the example families constructed in \cite[Section~IV]{raviv2016some}.

\begin{construction}\label{constr:family_Cg}
Fix $\ell, \an,\atau \in \ZZ_{>0}$ with $\ell < q$, and $\ell \mid \an$, and $\an > \max\{\atau^2,2\atau\}$. %
Define a family of LRS codes $\{\Code_g\}_{g \in \ZZ_{>0}, \, \ell \mid g}$ by choosing $\Code_g$ to have the code parameters
\begin{itemize}
\item $n= \an g$
\item $k=n-2\atau g+1$
\item $m=\eta = \tfrac{\an g}{\ell}$
\end{itemize}
and $\a \in (\Fqm^\ast)^\ell$ and $\betaVec \in \Fqm^\eta$ chosen in any way to satisfy \cref{thm:lower_bound_number_of_sparse_polys}.
\end{construction}

The following theorem shows that the code families in \cref{constr:family_Cg} an exponential list size directly above half the minimum distance $\lfloor \tfrac{n-k}{2}\rfloor$.

\begin{theorem}\label{thm:construction_rate_and_list_size}
Let $\ell,\an,\atau, %
\{\Code_g\}_{g \in \ZZ_{>0}, \, \ell \mid g}$
be chosen as in \cref{constr:family_Cg}. Then,
\begin{enumerate}
\item For $g \to \infty$, the rate of $\Code_g$ converges to
\begin{align*}
R(\Code_g) \to 1-\tfrac{2\atau}{\an} > 1 -\tfrac{2}{\sqrt{\an}}.
\end{align*}
\item For $g \to \infty$, we have
\begin{align*}
  \ListSize(\Code_g, \lfloor \tfrac{n-k}{2}\rfloor +1) \in \Omega\left(q^{c n}\right),
\end{align*}
where $c$ is a positive constant that depends only on $\ell,\an,\atau$.
\end{enumerate}
\end{theorem}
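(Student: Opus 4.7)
The plan is to verify both claims by direct computation, using \cref{cor:list_size_general} as the main engine for the list-size bound.

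For part (1), I would simply compute the rate as $R(\Code_g) = k/n = 1 - 2\atau g/n + 1/n = 1 - 2\atau/\an + 1/(\an g)$, which converges to $1 - 2\atau/\an$ as $g \to \infty$. The bound $1 - 2\atau/\an > 1 - 2/\sqrt{\an}$ then follows immediately from the assumption $\an > \atau^2$, which gives $\atau/\an < 1/\sqrt{\an}$.

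For part (2), the key observation is that with $\tau := \lfloor (n-k)/2 \rfloor + 1$ and $n-k = 2\atau g - 1$, one gets $\tau = \atau g$ exactly. I would then verify that $\Code_g$ and this choice of $\tau$ satisfy every hypothesis of \cref{cor:list_size_general} (equivalently \cref{thm:lower_bound_number_of_sparse_polys}): $\ell \mid \tau$ holds since $\ell \mid g$ in the indexing of the family; $g \mid m$ holds since $m = \an g/\ell$ and $\ell \mid \an$; $\eta = m$ is built into the construction; $g \mid \tau$ is immediate; and $k = n - 2\tau + 1 > n - 2\tau$ is trivially satisfied. The vectors $\a$ and $\betaVec$ can always be picked to fulfil the structural requirements since $\ell < q$ and $\eta \leq m$.

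Plugging $\tau = \atau g$, $m = \eta = \an g/\ell$ into the bound of \cref{cor:list_size_general}, the term $\eta - m$ vanishes, leaving
\begin{align*}
\ListSize(\Code_g,\tau) \;\geq\; q^{\,m - \tau^2/(g\ell)} \,\gamma_{q^g}^{-\ell} \;=\; q^{(\an-\atau^2)g/\ell}\,\gamma_{q^g}^{-\ell}.
\end{align*}
I would then bound $\gamma_{q^g}^{-\ell}$ below by a positive constant, using that $\gamma_{q^g}$ is monotonically decreasing in $q^g$ and tends to $1$, so in particular $\gamma_{q^g} \leq \gamma_q$ is bounded uniformly in $g$. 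Finally, since $\an > \atau^2$, the exponent $(\an - \atau^2)g/\ell$ is a positive multiple of $g = n/\an$, so setting $c := (\an-\atau^2)/(\an \ell) > 0$ yields $\ListSize(\Code_g,\tau) \in \Omega(q^{cn})$, as desired.

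The main obstacle is essentially bookkeeping: ensuring the somewhat intricate divisibility conditions ($\ell \mid \tau$, $g \mid \tau$, $g \mid m$) of \cref{thm:lower_bound_number_of_sparse_polys} are all met simultaneously for the chosen radius $\tau = \atau g$, which is the reason the family is indexed by $g$ with $\ell \mid g$ and parameters are scaled by $g$. Once that is in place, the exponential-in-$n$ conclusion drops out of the hypothesis $\an > \atau^2$ with no further work.
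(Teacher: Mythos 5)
Your proposal is correct and follows essentially the same route as the paper's proof: check the hypotheses of \cref{cor:list_size_general}, substitute $\tau=\atau g$, $m=\eta=\an g/\ell$ to get the exponent $(\an-\atau^2)g/\ell$, and use $\an>\atau^2$ together with a uniform bound on $\gamma_{q^g}^{-\ell}$. You are in fact more explicit than the paper about verifying the divisibility conditions, and your sign $+\tfrac{1}{\an g}$ in the rate computation is the correct one.
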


\begin{proof}
Ad 1): We have
\begin{align*}
R(\Code_g) = \tfrac{k}{n} = 1-\tfrac{2\atau}{\an}-\tfrac{1}{\an g} \to 1-\tfrac{2\atau}{\an}.
\end{align*}
The inequality follows by $\an \geq \atau^2+1$.

Ad 2): First note that $k > n-2\tau$ with $\tau_g := \lfloor \frac{n-k}{2}\rfloor +1 = Dg$. Since by the choice of $n,m,\eta,k,\tau,\alphaVec,\a$, all conditions of  \cref{cor:list_size_general} are fulfilled, then
\begin{align*}
\ListSize(\Code_g, \tau_g)
&\geq q^{m+\frac{\tau}{g} \left(\npr-m - \frac{\tau}{\ell}\right)} \gamma_{q^g}^{-\ell} \\
&= q^{\frac{g}{\ell} (\an-\atau^2)}\gamma_{q^g}^{-\ell}.
\end{align*}
The claim follows due to $\gamma_{q^g}^{-\ell} \to 1$, $\ell$ being a constant, and $\an-\atau^2>0$.
\end{proof}

We can choose a family as in \cref{constr:family_Cg} with a resulting rate $R$ arbitrarily close to $1$ by setting $\atau = 1$ and $\an$ any positive multiple of $\ell$ greater than $2$.
For $\ell>2$, the least possible rate we can achieve is obtained by choosing $\an = \ell$ and $D = \lfloor \sqrt{\an - 1} \rfloor$, resulting in $R \approx 1 - \frac 2 {\sqrt \ell}$.
\cref{tab:minimal_achievable_rates} shows the exact values of the smallest achievable rates for $\ell\leq 20$, as well as the corresponding constants $\an$ an $\atau$.

\begin{table}[ht!]
\caption{Minimal achievable rate by the code families in \cref{constr:family_Cg}, depending on the number of blocks $\ell$, for small values of $\ell$.}
\label{tab:minimal_achievable_rates}
\centering
{
\renewcommand{\arraystretch}{1.2}
\setlength{\tabcolsep}{5pt}
\begin{tabular}{c|c|c|c}
$\ell$ & $R$ & $\an$ & $\atau$ \\
\hline
\hline
$1$ & $0.200000$ & $5$ & $2$ \\
$2$ & $0.333333$ & $6$ & $2$ \\
$3$ & $0.333333$ & $3$ & $1$ \\
$4$ & $0.500000$ & $4$ & $1$ \\
$5$ & $0.200000$ & $5$ & $2$ \\
\hline 
$6$ & $0.333333$ & $6$ & $2$ \\
$7$ & $0.428571$ & $7$ & $2$ \\
$8$ & $0.500000$ & $8$ & $2$ \\
$9$ & $0.555556$ & $9$ & $2$ \\
$10$ & $0.400000$ & $10$ & $3$
\end{tabular}
\quad \quad
\begin{tabular}{c|c|c|c}
$\ell$ & $R$ & $\an$ & $\atau$ \\
\hline
\hline
$11$ & $0.454545$ & $11$ & $3$ \\
$12$ & $0.500000$ & $12$ & $3$ \\
$13$ & $0.538462$ & $13$ & $3$ \\
$14$ & $0.571429$ & $14$ & $3$ \\
$15$ & $0.600000$ & $15$ & $3$ \\
\hline 
$16$ & $0.625000$ & $16$ & $3$ \\
$17$ & $0.529412$ & $17$ & $4$ \\
$18$ & $0.555556$ & $18$ & $4$ \\
$19$ & $0.578947$ & $19$ & $4$ \\
$20$ & $0.600000$ & $20$ & $4$
\end{tabular}
}
\end{table}

\bibliographystyle{IEEEtran}

\bibliography{main}

\begin{thebibliography}{10}
\providecommand{\url}[1]{#1}
\csname url@samestyle\endcsname
\providecommand{\newblock}{\relax}
\providecommand{\bibinfo}[2]{#2}
\providecommand{\BIBentrySTDinterwordspacing}{\spaceskip=0pt\relax}
\providecommand{\BIBentryALTinterwordstretchfactor}{4}
\providecommand{\BIBentryALTinterwordspacing}{\spaceskip=\fontdimen2\font plus
\BIBentryALTinterwordstretchfactor\fontdimen3\font minus
  \fontdimen4\font\relax}
\providecommand{\BIBforeignlanguage}[2]{{%
\expandafter\ifx\csname l@#1\endcsname\relax
\typeout{** WARNING: IEEEtran.bst: No hyphenation pattern has been}%
\typeout{** loaded for the language `#1'. Using the pattern for}%
\typeout{** the default language instead.}%
\else
\language=\csname l@#1\endcsname
\fi
#2}}
\providecommand{\BIBdecl}{\relax}
\BIBdecl

\bibitem{nobrega2010multishot}
R.~W. N{\'o}brega and B.~F. Uchoa-Filho, ``{Multishot Codes for Network Coding
  Using Rank-Metric Codes},'' in \emph{2010 Third IEEE International Workshop
  on Wireless Network Coding}.\hskip 1em plus 0.5em minus 0.4em\relax IEEE,
  2010.

\bibitem{martinez2019universal}
U.~Mart{\'\i}nez-Pe{\~n}as and F.~R. Kschischang, ``Universal and dynamic
  locally repairable codes with maximal recoverability via sum-rank codes,''
  \emph{IEEE Transactions on Information Theory}, 2019.

\bibitem{martinez2019reliable}
------, ``{Reliable and Secure Multishot Network Coding using Linearized
  Reed-Solomon Codes},'' \emph{IEEE Transactions on Information Theory}, 2019.

\bibitem{shehadeh2020rate}
M.~Shehadeh and F.~R. Kschischang, ``{Rate-Diversity Optimal Multiblock
  Space-Time Codes via Sum-Rank Codes},'' in \emph{IEEE International Symposium
  on Information Theory (ISIT)}, 2020.

\bibitem{wachter2011partial}
A.~Wachter, V.~R. Sidorenko, M.~Bossert, and V.~V. Zyablov, ``{On (Partial)
  Unit Memory Codes Based on Gabidulin Codes},'' \emph{Problems of Information
  Transmission}, vol.~47, no.~2, pp. 117--129, 2011.

\bibitem{wachter2012rank}
A.~Wachter-Zeh and V.~Sidorenko, ``{Rank Metric Convolutional Codes for Random
  Linear Network Coding},'' in \emph{2012 International Symposium on Network
  Coding (NetCod)}.\hskip 1em plus 0.5em minus 0.4em\relax IEEE, 2012, pp.
  1--6.

\bibitem{wachter2015convolutional}
A.~Wachter-Zeh, M.~Stinner, and V.~Sidorenko, ``Convolutional codes in rank
  metric with application to random network coding,'' \emph{IEEE Transactions
  on Information Theory}, vol.~61, no.~6, pp. 3199--3213, 2015.

\bibitem{napp2017mrd}
D.~Napp, R.~Pinto, J.~Rosenthal, and P.~Vettori, ``Mrd rank metric
  convolutional codes,'' in \emph{2017 IEEE International Symposium on
  Information Theory (ISIT)}.\hskip 1em plus 0.5em minus 0.4em\relax IEEE,
  2017, pp. 2766--2770.

\bibitem{napp2018faster}
------, ``Faster decoding of rank metric convolutional codes,'' in \emph{23rd
  International Symposium on Mathematical Theory of Networks and Systems},
  2018.

\bibitem{martinez2018skew}
U.~Mart{\'\i}nez-Pe{\~n}as, ``{Skew and Linearized Reed--Solomon Codes and
  Maximum Sum Rank Distance Codes Over Any Division Ring},'' \emph{Journal of
  Algebra}, vol. 504, pp. 587--612, 2018.

\bibitem{boucher2019algorithm}
D.~Boucher, ``{An Algorithm for Decoding Skew Reed--Solomon Codes With Respect
  to the Skew Metric},'' in \emph{Workshop on Coding and Cryptography}, 2019.

\bibitem{caruso2019residues}
X.~Caruso, ``{Residues of Skew Rational Functions and Linearized Goppa
  Codes},'' \emph{arXiv preprint arXiv:1908.08430}, 2019.

\bibitem{bartz2020fast}
H.~Bartz, T.~Jerkovits, S.~Puchinger, and J.~Rosenkilde, ``{Fast Decoding of
  Codes in the Rank, Subspace, and Sum-Rank Metric},'' \emph{arXiv preprint
  arXiv:2005.09916}, 2020.

\bibitem{martinezpenas2020sumrank}
U.~Martínez-Peñas, ``{Sum-Rank BCH Codes and Cyclic-Skew-Cyclic Codes},''
  2020.

\bibitem{byrne2020fundamental}
E.~Byrne, H.~Gluesing-Luerssen, and A.~Ravagnani, ``{Fundamental Properties of
  Sum-Rank Metric Codes},'' 2020.

\bibitem{reed1960polynomial}
I.~S. Reed and G.~Solomon, ``{Polynomial Codes over Certain Finite Fields},''
  \emph{Journal of the Society for Industrial and Applied Mathematics}, vol.~8,
  no.~2, pp. 300--304, 1960.

\bibitem{Delsarte_1978}
P.~Delsarte, ``{Bilinear Forms over a Finite Field with Applications to Coding
  Theory},'' \emph{Journal of Combinatorial Theory, Series A}, vol.~25, no.~3,
  pp. 226--241, 1978.

\bibitem{Gabidulin_TheoryOfCodes_1985}
E.~M. Gabidulin, ``{Theory of Codes with Maximum Rank Distance},''
  \emph{Problems of Information Transmission}, vol.~21, no.~1, pp. 3--16, 1985.

\bibitem{Roth_RankCodes_1991}
R.~M. Roth, ``{Maximum-Rank Array Codes and their Application to Crisscross
  Error Correction},'' \emph{IEEE Transactions on Information Theory}, vol.~37,
  no.~2, pp. 328--336, 1991.

\bibitem{elias1957list}
P.~Elias, ``{List Decoding for Noisy Channels},'' 1957.

\bibitem{wozencraft1958list}
J.~M. Wozencraft, ``{List Decoding},'' \emph{Quarterly Progress Report},
  vol.~48, pp. 90--95, 1958.

\bibitem{johnson_new_1962}
S.~M. Johnson, ``{A {New} {Upper} {Bound} for {Error}-{Correcting} {Codes}},''
  \emph{IEEE Transactions on Information Theory}, vol.~46, pp. 203--207, 1962.

\bibitem{rudra2014every}
A.~Rudra and M.~Wootters, ``{Every List-Decodable Code for High Noise has
  Abundant Near-Optimal Rate Puncturings},'' in \emph{Proceedings of the
  forty-sixth annual ACM symposium on Theory of computing}, 2014, pp. 764--773.

\bibitem{justesen2001bounds}
J.~Justesen and T.~Hoholdt, ``{Bounds on List Decoding of MDS Codes},''
  \emph{IEEE Transactions on Information Theory}, vol.~47, no.~4, pp.
  1604--1609, 2001.

\bibitem{ben2009subspace}
E.~Ben-Sasson, S.~Kopparty, and J.~Radhakrishnan, ``{Subspace Polynomials and
  Limits to List Decoding of Reed--Solomon Codes},'' \emph{IEEE Transactions on
  Information Theory}, vol.~56, no.~1, pp. 113--120, 2009.

\bibitem{wachter2013bounds}
A.~Wachter-Zeh, ``{Bounds on List Decoding of Rank-Metric Codes},'' \emph{IEEE
  Transactions on Information Theory}, vol.~59, no.~11, pp. 7268--7277, 2013.

\bibitem{raviv2016some}
N.~Raviv and A.~{Wachter-Zeh}, ``{Some Gabidulin Codes Cannot be List Decoded
  Efficiently at Any Radius},'' \emph{IEEE Transactions on Information Theory},
  vol.~62, no.~4, pp. 1605--1615, 2016.

\bibitem{lam1985general}
T.-Y. Lam, \emph{{A General Theory of Vandermonde Matrices}}.\hskip 1em plus
  0.5em minus 0.4em\relax Center for Pure and Applied Mathematics, University
  of California, Berkeley, 1985.

\bibitem{lam1988vandermonde}
T.-Y. Lam and A.~Leroy, ``{Vandermonde and Wronskian Matrices Over Division
  Rings},'' \emph{Journal of Algebra}, vol. 119, no.~2, pp. 308--336, 1988.

\bibitem{ore1933theory}
{\O}.~Ore, ``{Theory of Non-Commutative Polynomials},'' \emph{Annals of
  Mathematics}, pp. 480--508, 1933.

\bibitem{leroy1995pseudolinear}
A.~Leroy, ``{Pseudolinear Transformations and Evaluation in Ore Extensions},''
  \emph{Bulletin of the Belgian Mathematical Society-Simon Stevin}, vol.~2,
  no.~3, pp. 321--347, 1995.

\bibitem{migler2004weight}
T.~Migler, K.~E. Morrison, and M.~Ogle, ``{Weight and Rank of Matrices over
  Finite Fields},'' 2004.

\bibitem{ott2021bounds}
C.~Ott, S.~Puchinger, and M.~Bossert, ``{Bounds and Genericity of
  Sum-Rank-Metric Codes},'' \emph{arXiv preprint arXiv:2102.02244}, 2021.

\end{thebibliography}

\end{document}